\documentclass[runningheads]{llncs}

\usepackage[T1]{fontenc}
\usepackage{graphicx}%
\usepackage{hyperref}
\usepackage{amsmath,amssymb,amsfonts}%
\usepackage{xcolor}%
\usepackage{algorithm}%
\usepackage{algorithmicx}%
\usepackage{algpseudocode}%
\usepackage{braket}         
\usepackage{subfig}
\usepackage[qm]{qcircuit}
\usepackage{todonotes}
\usepackage{cite}

\newcommand{\mat}[1]{\left(\begin{matrix}#1\end{matrix}\right)}




\begin{document}
\title{A quantum walk-based scheme for distributed searching on arbitrary graphs}
%
\author{Mathieu Roget\inst{1,2}\orcidID{0000-0003-3890-1552} \and
	Giuseppe Di Molfetta\inst{1,2}\orcidID{0000-0002-1261-7476}}
\institute{Laboratoire d'Informatique et Systèmes, Marseille, France \and
	Université Aix-Merseille, Marseille, France}

\maketitle

\begin{abstract}
A discrete time quantum walk is known to be the single-particle sector of a quantum cellular automaton. Searching in this mathematical framework has interested the community since a long time. However, most results consider spatial search on regular graphs. This work introduces a new quantum walk-based searching scheme, designed to search nodes or edges on arbitrary graphs. As byproduct, such new model allows to generalise quantum cellular automata, usually defined on regular grids, to quantum anonymous networks, allowing a new physics-like mathematical environment for distributed quantum computing.

\keywords{Quantum Distributed Algorithm \and Quantum Walk \and Quantum Cellular Automata \and Quantum Anonymous Network \and Searching Algorithm}
\end{abstract}

\section*{Introduction}\label{sec:intro}

Quantum Walks~(QW), from a mathematical point of view, coincide with the single-particle sector of quantum cellular automata (QCA), namely a spatially distributed network of local quantum gates. Usually defined on a $d$-dimensional grid of cell, they are known to be capable of universal computation~\cite{arrighi2019overview}. Both quantum walks and quantum cellular automata, with their beginnings in digital simulations of fundamental physics~\cite{di2013quantum, bisio2016quantum}, come into their own in algorithmic search and optimization applications~\cite{santha2008quantum}. Searching using QW has been extensively studied in the past decades, with a wide range of applications, including optimisation~\cite{slate2021quantum} and machine learning~\cite{melnikov2019predicting}. On the other hand, search algorithms as a field independent of quantum walks have recently been used, as subroutines, to solve distributed computational tasks~\cite{gall2018quantum, le2018sublinear, izumi2019quantum}. However, in all these examples, some global information is supposed to be known by each node of the network, such as the size, and usually the network is not anonymous, namely every node has a unique label. Note that, the absence of anonymity and the quantum properties of the algorithm successfully solves the incalculability of certain problems such as the leader election problem. However, encoding global information within a quantum state is generally problematic. In order to address such issue, here we first introduce a new QW-based scheme on arbitrary graphs for rephrasing search algorithms. Then we move to the multi-states generalisation, which successfully implement a QCA-based distributed anonymous protocol for searching problems, avoiding any use of global information.

\paragraph{Contribution}
Section \ref{sec:maths} of this work introduces a Quantum Walk model well suited to search indifferently a node or an edge in arbitrary graphs. In this model the walker's amplitudes are defined onto the graph's edges and ensures 2-dimensional coin everywhere. We detail how this Quantum Walk can be used to search a node or an edge and we show examples of this Quantum Walk on several graphs. In this first part, the Quantum Walk is introduced formally as a discrete dynamical system. In Section \ref{sec:impl} we move to the multi-particle sector, allowing a many quantum states-dynamics over the network, based on the previous model, and leading to a distributed searchign protocol~\ref{sec:impl}. For the coin and the scattering, local quantum circuits and distributed algorithms are provided. The implementation proposed conserves the graph locality and does not require a node or an edge to have global information like the graph size. The nodes (and edges) do not have unique label, and no leader is needed.


\section{Mathematical model of quantum walk on graphs}\label{sec:maths}
This section introduces the quantum walk searching algorithm on undirected connected graphs. First we introduce the quantum walk model on graph and give an example. Then we explain how such a quantum walk can be used to search a marked edge or a marked node. Finally, we show two examples of searching~: in the first one we look for an edge and in the second one we look for a node. Finally we derive the asymptotic complexities.

\subsection{Quantum walk model on graphs}
We consider an undirected connected graph $G = (V,E)$, where $V$ is the set of vertices and $E$ the set of edges. We define the walker's position on the graph's edges and a coin register of dimension two (either $+$ or $-$). The full state of the walker at step $t$ is noted~:
$$
\ket{\Phi_t} = \sum_{(u,v)\in E} \psi_{u,v}^+\ket{(u,v)}\ket{+} + \psi_{u,v}^-\ket{(u,v)}\ket{-}.
$$

The graph is undirected so we indifferently use $\ket{(u,v)}$ and $\ket{(v,u)}$ to name the same state. Similarly for the complex amplitudes $ \psi_{u,v}^+(t) =  \psi_{u,v}^-(t)$. We also introduce a polarity for every edge of $G$. For each of them, we have a polarity function $\sigma$, such that~:
$
\forall (u,v) \in E, \; \sigma(u,v) \in \{+,-\} \text{ and } \sigma(u,v) \neq \sigma(v,u).
$
Figure \ref{fig:tikzedge} illustrates how amplitudes and polarity are placed with edge $(u,v)$ of polarity $\sigma(u,v) = 1$. 
 \begin{figure}[h]
	\centering
	\subfloat[$\sigma(u,v) = +$]{\includegraphics[width=0.4\linewidth]{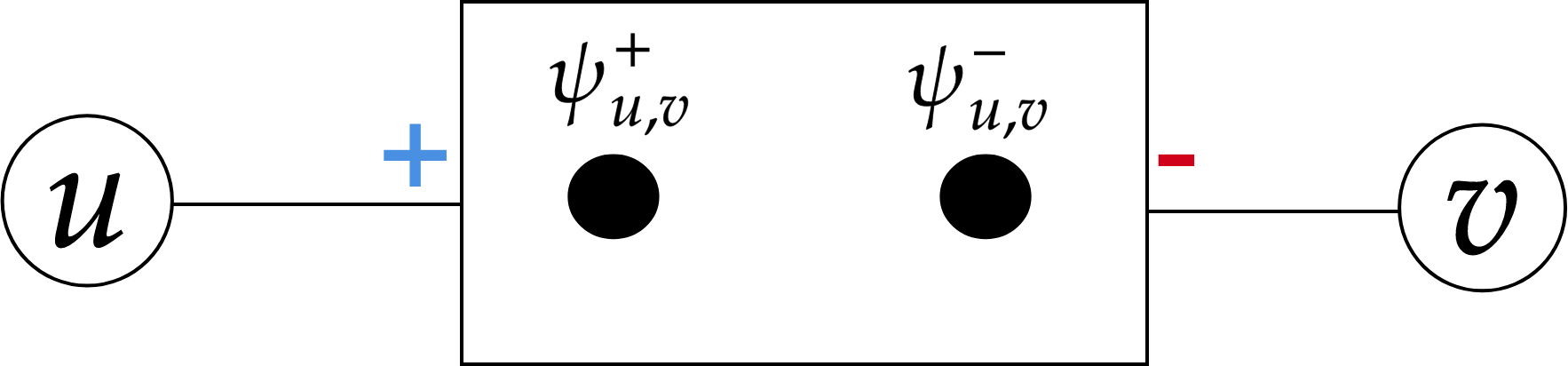}}\qquad
	\subfloat[$\sigma(u,v) = -$]{\includegraphics[width=0.4\linewidth]{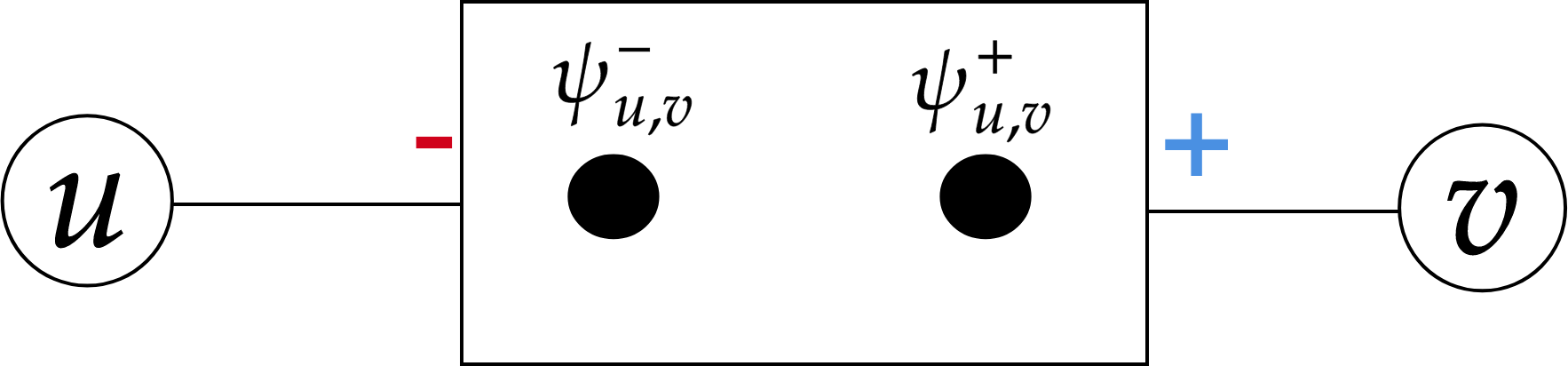}}\\
	\caption{The edge $(u,v)$ and how are placed the amplitudes for the two possible polarities.}
	\label{fig:tikzedge}
\end{figure}
The polarity is a necessary and arbitrary choice made at the algorithm's initialization \textit{independently for every edge}. We discuss why it is necessary and a way to make that choice at the end of this section. 
The full unitary evolution of the walk reads~: $\ket{\Phi_{t+1}} = S \times (I\otimes C) \times \ket{\Phi_{t}}$, where $C$ is the local coin operation acting on the coin register~: 
$$
\forall (u,v)\in E, \; \ket{(u,v)}\ket{\pm} \overset{\text{coin}}{\longmapsto} (I\otimes C) \times \ket{(u,v)}\ket{\pm} =  \ket{(u,v)} (C\ket{\pm}).
$$
and $S$ is the scattering which moves the complex amplitudes $\alpha_{u,v}^\pm$ according to the coin state~:
$$
\forall u \in V,\; \left(\psi_{u,v}^{\sigma(u,v)}\right)_{v\in V} \overset{\text{scattering}}{\longmapsto} D_{\text{deg}(u)}\times \left(\psi_{u,v}^{\sigma(u,v)}\right)_{v\in V},
$$
where $D_n = \left(\frac{2}{n}\right) _{i,j} - I_n$.

As an example of the above dynamics, one can consider the path of size 3 with the nodes $\{u,v,w\}$, with polarity $\sigma(u,v) = \sigma(v,w) = +$. Figure \ref{fig:ring} shows the unitary evolution of the walker from step $t$ to step $t+1$, when the coin coincide the first Pauli matrix $X$.

\begin{figure}
	\centering
	\includegraphics[width=\textwidth]{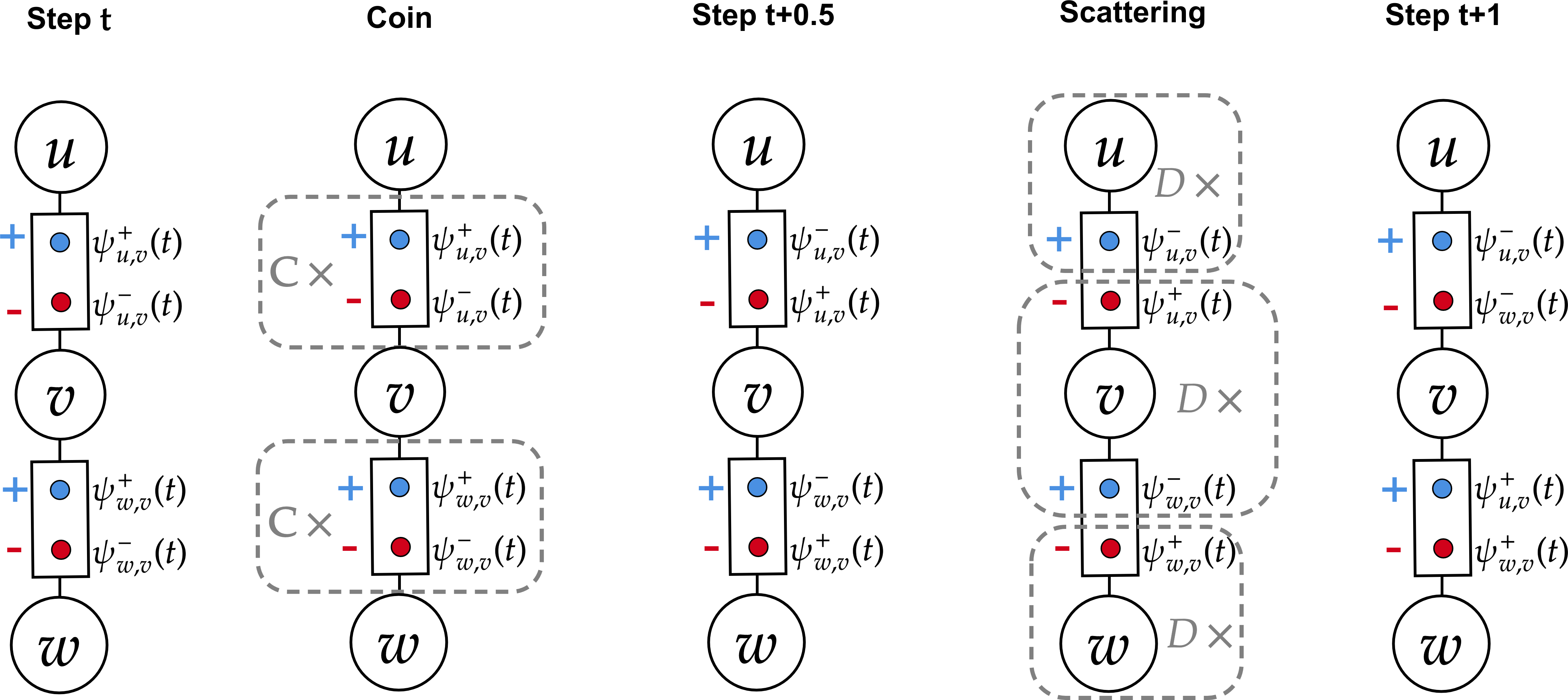}
	\caption{Example of a walk on a path of size 3.}
	\label{fig:ring}
\end{figure}

\paragraph{Polarity : why and how}
There are two reasons we need polarity. 
First, if the coin operator acts differently on $\ket{+}$ and $\ket{-}$, then different polarities lead to different dynamics. Polarities, in fact, determines how each edge's state scatter within the network, analogously to other spatial searching algorithms on regular lattice~\cite{bezerra2021quantum, arrighi2018dirac, roget2020grover}.
Moreover, polarity is used to divide the edges in two. So each node can only access one amplitude of each edge (depending on the polarity of the edges). This makes things much simpler for a distributed implementation, since we do not have to consider the case of two simultaneous operations on the same amplitude during scattering. 
One way to initialise the polarity on the graph is to create a coloring of $G$ at the initial time step. Afterward, edges have their $+$ pole in the direction of the node with the higher color. This design is especially convenient for bipartite graphs where every node sees only $+$ polarities or only $-$ polarities.

\subsection{Searching}
The above dynamical system formally describes a quantum walk on the graph's edges. If the quantum state is measured, we obtain one of the edges of the graph. This scheme is thus well suited to search edges. Thus we may now introduce an oracle marking the desired edge $(a,b) \in E$. Analogously with the standard spatial search, the oracle is defined as follows~:
$$
\mathcal O_{f} =\left(\sum_{f((u,v))=1}\ket{(u,v)}\bra{(u,v)}\right) \otimes R + \left(\sum_{f((u,v))=0}\ket{(u,v)}\bra{(u,v)}\right)\otimes I_2, 
$$
where $f$ is the classical oracle equals to $1$ if and only if the edge is marked, and $0$ otherwise. The operator $R$ is a special coin operator which is applied only to the marked edge. Without lack of generality, in the following we set $C=X$ and $R=-X$. The algorithm proposed here is the following~: 
\begin{algorithm}
	\caption{Search a marked edge}\label{algo:search}
	\begin{algorithmic}[1]
		\Require $G = (V,E)$ undirected, connected graph
		\Require $f$ the classical oracle
		\Require $T\in \mathbb{N}$ the hitting time
		\Function{Search}{$G,f,T$}
			\State $\displaystyle \ket{\Phi} \gets \frac{1}{\sqrt{2|E|}}\sum_{(u,v)\in E} \ket{(u,v)}\ket{+} + \ket{(u,v)}\ket{-}$ \Comment{Diagonal initial state.}
			\For{$0\leq i < T$}
				\State $\displaystyle \ket{\Phi} \gets S \times (I\otimes C) \times \mathcal O_{f} \times \ket{\Phi}$ \Comment{One step of the walk}
			\EndFor
			\State $(u,v,\pm) \gets$ \Call{Measure}{$\ket{\Phi}$} \Comment{Measure the quantum state.}
			\State \Return $(u,v)$
		\EndFunction
	\end{algorithmic}
\end{algorithm}
The initial state is initialized to be diagonal on the basis states and the additional oracle operation is added to the former QW-dynamic. There are two parameters that characterize the above algorithm~: the probability of success $P$ of returning the marked edge and the hitting time $T$. The number of oracle calls of this algorithm is $O(T)$. In practice, we want to choose $T$ such that $P$ is maximal.
The previous algorithm can easily be made into a Las Vegas algorithm whose answer is always correct but the running time random. Algorithm \ref{algo:guaranteed_search} shows such transformation. Interestingly, the expected number of times Algorithm \ref{algo:guaranteed_search} calls Algorithm \ref{algo:search} is $O(1/P)$ where $P$ is the probability of success of Algorithm \ref{algo:search}. The expected complexity of Algorithm \ref{algo:guaranteed_search} is $O\left(T/P\right).$
\begin{algorithm}
	\caption{Search a marked edge with guaranteed success}\label{algo:guaranteed_search}
	\begin{algorithmic}[1]
		\Require $G = (V,E)$ undirected, connected graph
		\Require $f$, the classical oracle
		\Require $T\in \mathbb{N}$ the hitting time
		\Function{GuaranteedSearch}{$G,f,T$}
		\State $(u,v) \gets (\texttt{nil},\texttt{nil})$ \Comment{Initial value}
		\While{$f((u,v)) \neq 1$} \Comment{Until we find the marked edge ...}
			\State $(u,v) \gets$ \Call{Search}{$G,f,T$} \Comment{... search again.}
		\EndWhile
		\State \Return $(u,v)$
		\EndFunction
	\end{algorithmic}
\end{algorithm}
The complexity of our algorithm is limited by the optimal complexity of $O(\sqrt{K})$ for searching problem where $K$ is the total number of elements. This is the complexity of the Grover algorithm which has been shown to be optimal \cite{zalka1999grover}. However Grover's algorithm assume full connectivity between elements which is not our case. The quantum walk presented here is on the edge of $G$, which means that the optimal complexity is $O\left(\sqrt{|E|}\right)$.

\subsection{Example : Searching an edge in the star graph}
In this example we consider the star graph with $M$ edges and $M+1$ nodes; a graph with $M+1$ nodes where every node (other than node $u_0$) is connected to node $u_0$. Our searching algorithm performs well on this graph, as shown by Theorem \ref{th:star}. In this section, we show the proof of Theorem \ref{th:star} which is mainly spectral analysis on a simplified dynamic.

\begin{theorem}\label{th:star}
	Let $G$ be the star graph with $M$ edges. Algorithm \ref{algo:search} has an optimal hitting time $T=O\left(\sqrt{M}\right)$ and a probability of success $O(1)$ for $G$. Algorithm \ref{algo:guaranteed_search} has expected complexity $O\left(\sqrt{M}\right)$.
\end{theorem}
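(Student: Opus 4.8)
The plan is to collapse the $2M$-dimensional walk to a constant-size linear system using the automorphisms of the star, and then to run a Grover-style spectral analysis on that system. Write $u_0$ for the centre and $u_1,\dots,u_M$ for the leaves, with $(u_0,u_1)$ marked. Since $C=X$ and $R=-X$ merely permute (and possibly negate) the two amplitudes of an edge, they commute with relabelling the polarity, so the choice of $\sigma$ is irrelevant here; I fix $\sigma(u_0,u_i)=+$ for every $i$, so that the scattering at $u_0$ is the Grover diffusion $D_M$ on the $M$ centre-side amplitudes while each leaf carries $D_1=I$ and does nothing. The group $\mathrm{Sym}(\{u_2,\dots,u_M\})$ commutes with $S$, $I\otimes C$ and $\mathcal O_f$, hence the walk preserves the $4$-dimensional subspace spanned by $\ket{(u_0,u_1)}\ket{\pm}$ and by the uniform superpositions $\tfrac{1}{\sqrt{M-1}}\sum_{i\ge 2}\ket{(u_0,u_i)}\ket{\pm}$, and the diagonal initial state of Algorithm~\ref{algo:search} lies in it. I would then write $U=S\,(I\otimes C)\,\mathcal O_f$ in this basis: on the marked edge, oracle followed by coin gives $X\cdot(-X)=-I$; on the unmarked edges the coin swaps the two amplitudes and then $D_M$ mixes the centre side. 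This yields an explicit $4\times 4$ real orthogonal matrix in which the marked-edge ``$-$'' amplitude decouples with eigenvalue $-1$ (so it keeps modulus $1/\sqrt{2M}$ for all $t$, contributing only $O(1/M)$ to any probability), and the remaining $3\times 3$ block $U'$, in the coordinates (marked centre-amplitude, uniform unmarked centre-amplitude, uniform unmarked leaf-amplitude), is an $O(1/M)$-perturbation of a fixed permutation-type orthogonal matrix.

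\textbf{Spectrum of the reduced operator.} A $3\times 3$ real orthogonal matrix has eigenvalues $\mu,e^{\pm i\theta}$ with $\mu=\det U'$; a one-line computation gives $\det U'=-1$, and from $\mathrm{tr}\,U'=\mu+2\cos\theta$ one reads off $\cos\theta=1-\tfrac1M$, hence $\theta=\arccos(1-\tfrac1M)=\Theta(1/\sqrt M)$ (and $\theta\notin\{0,\pi\}$, so the spectrum is simple). Thus $U'$ acts as a rotation by $\theta$ inside the plane $\Pi$ spanned by the $e^{\pm i\theta}$-eigenvectors, composed with a reflection along the unit $(-1)$-eigenvector $e_{-1}$, which I would compute explicitly as $e_{-1}\propto(-1,-\sqrt{M-1},\sqrt{M-1})$.

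\textbf{Geometry and conclusion.} The three quantities that matter are: (i) the projection $v_0$ of $\ket{\Phi_0}$ onto the reduced space is proportional to $(1,\sqrt{M-1},\sqrt{M-1})$ and satisfies $|\langle e_{-1},v_0\rangle|=O(1/M)$, so $v_0$ lies essentially inside $\Pi$; (ii) the marked-edge direction $\ket{(u_0,u_1)}\ket{+}\leftrightarrow(1,0,0)$ has $|\langle e_{-1},(1,0,0)\rangle|^2=\tfrac{1}{2M-1}$, so it too is in $\Pi$ up to $O(1/M)$; (iii) inside $\Pi$ the angle $\alpha_0$ between $v_0$ and $(1,0,0)$ satisfies $\cos\alpha_0=\Theta(1/\sqrt M)$, i.e. $\alpha_0=\tfrac\pi2-\Theta(1/\sqrt M)$. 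Therefore $(U')^{t}v_0$ rotates within $\Pi$ and its component along the marked direction is $\cos(\alpha_0-t\theta)+O(1/M)$; taking $T=\lceil\alpha_0/\theta\rceil=\Theta(\sqrt M)$ (the integer rounding costs $O(\theta)=O(1/\sqrt M)$ in angle, hence $O(1/M)$ in probability) gives success probability $P=1-O(1/M)=O(1)$ for Algorithm~\ref{algo:search}. Optimality of this hitting time follows from the $\Omega(\sqrt{|E|})=\Omega(\sqrt M)$ Grover lower bound recalled above, and the $O(\sqrt M)$ expected complexity of Algorithm~\ref{algo:guaranteed_search} is then the general $O(T/P)$ bound with $T=\Theta(\sqrt M)$ and $P=\Theta(1)$.

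\textbf{Main obstacle.} The only genuinely delicate point is part (iii) together with the bookkeeping of the $O(1/M)$ error terms: one has to argue that, because $v_0$ and the target both lie within $O(1/M)$ of $\Pi$, their images under the true $(U')^{t}$ stay within $O(1/M)$ of the idealised in-plane rotation uniformly in $t$, so the errors do not accumulate over the $\Theta(\sqrt M)$ steps, and likewise that the integer rounding of $\alpha_0/\theta$ is harmless. The symmetry reduction, the $3\times 3$ determinant/trace computation, and the decoupling of the marked-edge ``$-$'' amplitude are all routine.
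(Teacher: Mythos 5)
Your proposal is correct and takes essentially the same route as the paper: the paper also fixes $\sigma(u_0,u_i)=+$, uses the leaf-permutation symmetry to show all unmarked edges carry identical amplitudes, observes that the marked ``$-$'' amplitude just flips sign each step, and reduces the walk to a $3\times 3$ matrix (your $U'$ in unnormalized coordinates) with spectrum $\{-1,e^{\pm i\lambda}\}$, $\cos\lambda = \tfrac{M-1}{M}$, giving $T\sim\tfrac{\pi}{2\sqrt{2}}\sqrt{M}$ and $P\sim 1$. Your flagged ``main obstacle'' in fact dissolves: since $U'$ is exactly orthogonal and $e_{-1}$ is an exact eigenvector, the component of the evolving state along $e_{-1}$ keeps constant modulus while the in-plane component rotates by exactly $\theta$ per step, so the $O(1/M)$ corrections never accumulate over the $\Theta(\sqrt{M})$ steps.
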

\begin{proof}
	We consider the star graph $G = (V,E)$ of size $M+1$ such that
	$
	V = \{u_0,\ldots,u_M\} \text{ and } E = \{(u_0,u_i)\mid 1\leq i \leq M\}.
	$
	We assume without a loss of generality that the marked edge is $(u_0,u_1)$ and that the polarity is $\forall i>1, \; \sigma(u_0,u_i) = +$. At any time step $t$, the state of the walk reads $\ket{\Phi_t} = \sum_{i=1}^M \psi_{u_0,u_i}^+(t)\ket{(u_0,u_i)}\ket{+} + \psi_{u_0,u_i}^-(t)\ket{(u_0,u_i)}\ket{-}.$	
	We first show that $\forall t\in \mathbb{N}, \; \forall i > 1, \; \psi_{u_0,u_i}^+(t) = \alpha^+ \text{ and } \psi_{u_0,u_i}^-(t) = \alpha^-.$ This greatly simplifies the way we describe the walk dynamic. Afterward we shall provide simple spectral analysis to extract the optimal hitting time $T$ and probability of success $P$. Next, we prove 
	the property $(Q_t): \quad \forall i > 1, \; \psi_{u_0,u_i}^+(t) = \alpha_t^+ \text{ and } \psi_{u_0,u_i}^-(t) = \alpha_t^-$ for all $t\in \mathbb{N}$.
	
	The initial state $\ket{\Psi_0}$ is diagonal on the basis states~:
	$$ \ket{\Phi} \gets \frac{1}{\sqrt{2|E|}}\sum_{(u,v)\in E} \ket{(u,v)}\ket{+} + \ket{(u,v)}\ket{-}.$$
	All $\left(\psi_{u_0,u_i}^\pm(0)\right)_{i\geq 1}$ are equals so the property $Q_0$ is satisfied with $\alpha_t^+ = \alpha_t^- = \frac{1}{\sqrt{2M}}$. Now, we assume that $Q_t$ is true and use the walk dynamic to show that $Q_{t+1}$ is also true. The state $\ket{\Psi_{t+1}}$ is described in Table \ref{tab:star} and show that $Q_{t+1}$ is true.
	
\begin{table}[h]
	\caption{Detailed dynamic of the quantum walk for star graphs.}
	\label{tab:star}
	\begin{tabular}{l|cccl}
		& Step $t$ & After oracle & After coin & After Scattering \\
		\hline
		& & & & \\
		$\alpha^+$ & $\alpha^+_t$ & $\alpha^+_t$ & $\alpha^-_t$ & $\displaystyle\alpha^+_{t+1} = \frac{(M-2)\alpha_t^- - 2\psi_{u_0,u_1}^+}{M}(t)$ \\
		& & & & \\
		$\alpha^-$ & $\alpha^-_t$ & $\alpha^-_t$ & $\alpha^+_t$ & $\alpha^-_{t+1}=\alpha^+_t$ \\
		& & & & \\
		$\psi_{u_0,u_1}^+$ & $\psi_{u_0,u_1}^+(t)$ & $-\psi_{u_0,u_1}^-(t)$ & $-\psi_{u_0,u_1}^+(t)$ & $\displaystyle\psi_{u_0,u_1}^+(t+1) = \frac{(2M-2)\alpha_t^- +(M-2)\psi_{u_0,u_1}^+}{M}(t)$ \\
		& & & & \\
		$\psi_{u_0,u_1}^-$ & $\psi_{u_0,u_1}^-(t)$ & $-\psi_{u_0,u_1}^+(t)$ & $-\psi_{u_0,u_1}^-(t)$ & $\psi_{u_0,u_1}^-(t+1)=-\psi_{u_0,u_1}^-(t)$ \\
	\end{tabular}
\end{table}

Using the recurrence in Table \ref{tab:star}, we can put the dynamic of the walk into a matrix form.
$$
\psi_{u_0,u_1}^-(t) = \frac{(-1)}{\sqrt{2M}} \qquad \text{and} \qquad X_{t+1} = AX_t,
$$
where
$$
X_t = \mat{\alpha^+_t\\\alpha_t^-\\\psi_{u_0,u_1}^+(t)\\}\qquad \text{and} \qquad A = \mat{
	0 & \frac{M-2}{M} & \frac{-2}{M}\\
	1 & 0 & 0\\
	0 & 2\frac{M-1}{M} & \frac{M-2}{M}\\
}.
$$

After diagonalizing $A$, the eigenvalues are $\{-1,e^{i\lambda},e^{-i\lambda}\}$, where $e^{i\lambda} = \frac{M-1+i\sqrt{2M-1}}{M}$. This leads to
$
\psi_{u_0,u_1}^+(t) \sim \frac{i}{2}\left(e^{-i\lambda t} - e^{i\lambda t}\right) \sim \sin(\lambda t)$, which allows us to deduce the probability $p_t$ of hitting the marked edge
$
p_t \sim \sin^2(\lambda t),
$
and then the optimal hitting time by solving $p_T = 1$:
$$
T = \frac{\pi}{2} \frac{1}{\lambda} \sim \frac{\pi}{2} \sqrt{\frac{M}{2}} \sim \frac{\pi}{2\sqrt{2}} \sqrt{M} = O(\sqrt{M}).
$$
Finally we have a probability of success $P\sim 1$ for the optimal hitting time $\displaystyle T \sim \frac{\pi}{2\sqrt{2}} \sqrt{M} = O(\sqrt{M})$.

\end{proof}

\subsection{Searching nodes}
The searching quantum walk presented in the previous section can search one marked edge in a graph. In order to search a node instead, we need to transform the graph we walk on. We call this transformation \textsc{starify}.

\begin{definition}{\textsc{Starify}}\\
	Let us consider an undirected graph $G = (V,E)$. The transformation \textsc{starify} on $G$ returns a graph $\tilde G$ for which
	\begin{itemize}
		\item every node and edge in $G$ is in $\tilde G$ (we call them real nodes and real edges).
		\item for every node $u\in V$, we add a node $\tilde u$ (we call these new nodes virtual nodes).
		\item for every node $u\in V$, we add the edge $(u,\tilde u)$ (we call these new edges virtual edges).
	\end{itemize}
	We call the resulting graph $\tilde G$ the starified graph of $G$.
\end{definition}

\paragraph{Searching nodes} Starifying a graph $G$ allows us to search a marked node $u$ instead of an edge.We can then use the previous searching walk to search the virtual edge and then deduce without ambiguity the marked node $u$ of the initial graph $G$. This procedure implies that we have to increase the size of the graph (number of edges and nodes). In particular, increasing the number of edges is significant since we increase the dimension of the walker (this can be significant for sparse graphs) and one must be careful of this when computing the complexity. A reassuring result is that searching a node on the complete graph (the strongest possible connectivity) has optimal complexity. As stated in Theorem \ref{th:complete}, the complexity is $O(\sqrt{M})$, which is optimal when searching over the edges. Compared to a classical algorithm in $O(M)$ (a depth first search for instance), we do have a quadratic speedup.

\begin{theorem}\label{th:complete}
	When using the quantum algorithm to search one marked node in the starified graph $\tilde G$ of the complete graph $G$ of size $N$, the probability of success is $P \sim 1$ and the hitting time $T \sim \frac{\pi}{4} N$.
\end{theorem}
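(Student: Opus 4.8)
The plan is to mimic the proof of Theorem~\ref{th:star}: use the large symmetry group of the starified complete graph $\tilde G$ to collapse the $2|\tilde E|$-dimensional walk onto a fixed-size linear recurrence, then extract the hitting time and the success probability from its spectrum. Here $|\tilde E| = \binom{N}{2}+N = \frac{N(N+1)}{2} \sim \frac{N^2}{2}$, and searching the marked node $u$ of $G$ amounts to running Algorithm~\ref{algo:search} on $\tilde G$ with the virtual edge $(u,\tilde u)$ marked.

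First I would observe that for $C=X$ and $R=-X$ the polarity $\sigma$ has no effect on the dynamics: flipping $\sigma$ on an edge $e$ conjugates the whole evolution by the coin-$X$ acting on $e$, which fixes the uniform initial state and commutes with the final measurement. Hence every automorphism of $\tilde G$ fixing $u$ is a symmetry of the algorithm. That group is $S_{N-1}$, permuting the $N-1$ other real vertices together with their virtual copies, and the diagonal initial state is invariant under it, so the state stays in the $S_{N-1}$-invariant subspace for all $t$. This subspace is spanned by at most six amplitude classes: the $u$-side and the $\tilde u$-side of the marked virtual edge $(u,\tilde u)$; the real-node side and the virtual-node side of a generic virtual edge $(v,\tilde v)$; the $u$-side and the $v$-side of a generic spoke $(u,v)$; and the common amplitude on the rim edges $(v,w)$ with $v,w\neq u$ (the two sides of a rim edge are forced equal by the transposition exchanging its endpoints).

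Next I would write the induced recurrence $X_{t+1}=AX_t$ explicitly by tracking one step: oracle, then coin $X$, then the node-wise Grover diffusions $D_{\deg}$. Two reductions appear exactly as in the star case: the $\tilde u$-side of the marked edge obeys $x(t+1)=-x(t)$ because $\tilde u$ has degree $1$ ($D_1=I$, and oracle followed by coin acts as $-I$ there), so it decouples and contributes only $O(1/|\tilde E|)$ to the success probability; and the virtual-node side of a generic virtual edge is a one-step-delayed copy of its real-node side. What remains is a constant-size real matrix, conjugate to an orthogonal one through the diagonal rescaling accounting for the class sizes $N-1$ (spokes, virtual edges) and $(N-1)(N-2)$ (rim edges), so its eigenvalues lie on the unit circle. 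Diagonalizing $A$, I would isolate the conjugate pair $e^{\pm i\theta}$ with $\theta\to 0$; writing $x_1$ for the $u$-side amplitude of the marked edge, one has $P_t\sim|x_1(t)|^2$ and $x_1(t)\sim A_0\sin(\theta t+\varphi)$ with $\varphi=O(\theta)$, so the first maximum, of value $A_0\sim 1$, is reached at $T=\frac{\pi}{2\theta}(1+o(1))$. A short expansion of the characteristic polynomial gives $\theta\sim 2/N$, hence $T\sim\frac{\pi}{4}N$; this is consistent with the small-$t$ iteration $x_1(t)\approx (2t+1)/N$ obtained directly from the recurrence, and with the heuristic value $\sqrt{2}\cdot\frac{\pi}{4}\sqrt{|\tilde E|}$.

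The main obstacle is the spectral step. Unlike the star graph, whose reduced matrix was $3\times3$, here the marked edge, the virtual edges, the spokes and the rim edges all interact, so the reduced matrix is five- or six-dimensional; extracting its characteristic polynomial, pinning down the branch $e^{i\theta}$ with $\theta = 2/N + O(N^{-2})$, and — most delicately — showing that the uniform initial state has overlap tending to $1$ with the two-dimensional rotation plane spanned by the $e^{\pm i\theta}$ eigenvectors (so that $P\to 1$, not merely some constant below $1$) is a routine but careful asymptotic computation.
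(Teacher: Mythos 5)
Your plan follows essentially the same route as the paper's own (very terse) proof: the paper merely says it reduces the dynamics by showing many edges carry identical amplitudes, obtains a small recursive equation, and then extracts the asymptotics (numerically), exactly the reduction you spell out via the $S_{N-1}$ symmetry, the decoupling of the degree-one $\tilde u$-side (which, as in Theorem~\ref{th:star}, just flips sign each step), and the one-step-delayed virtual sides. Your additions are sound and even sharpen the argument: the observation that for $C=X$, $R=-X$ flipping a polarity amounts to conjugating the evolution by $X$ on that edge (which fixes the uniform initial state and the edge-measurement statistics) is exactly what legitimises the symmetry reduction, and your consistency checks ($\theta\sim 2/N$, $T=\pi/(2\theta)\sim\pi N/4$, agreement with $\frac{\pi}{2\sqrt2}\sqrt{|\tilde E|}$ for $|\tilde E|\sim N^2/2$) match the stated theorem. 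Two caveats: your class count should be seven, not ``at most six'' (two for the marked virtual edge, two for a generic virtual edge, two for a spoke, one for the rim edges), reducing to a five-dimensional recurrence after the two eliminations you describe; and, like the paper (which resorts to numerics and defers details to supplementary material), you stop short of the decisive step — computing the characteristic polynomial of the reduced matrix, establishing $\theta=2/N+O(N^{-2})$, and showing the uniform state has overlap tending to $1$ with the slow rotation plane — which is the actual quantitative content of $P\sim 1$ and $T\sim\frac{\pi}{4}N$, so the proposal is a faithful blueprint rather than a complete proof.
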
 
\begin{proof}
	The proof is very similar to the one of Theorem \ref{th:star}. We show that several edges have the same states to reduce the dynamic to a simple recursive equation. We then solve it numerically and derive asymptotic values for the probability of success. A detailed proof shall be provided in the supplementary materials.
\end{proof}

\section{Distributed Implementation}\label{sec:impl}
In this section we move to the multi-particle states dynamics, allowing a distributed implementation of the above searching protocol. We assume that both nodes and edges have quantum and classical registers. They can transmit classical bit and apply controlled operations over the local neighborhood. The nodes need to know their degree but do not require a unique label. We first explain how the quantum registers are dispatched along the graph, then we introduce a distributed protocol to reproduce the dynamic of the walk. In this section we assume that we want to implement the walk on a graph $G=(V,E)$ with polarity $\sigma$.

\subsection{Positioning the qubits}
The implementation proposed uses $W$ states\cite{cruz2019efficient}; i.e. a state of the form $(\ket{001}+\ket{010}+\ket{100})/\sqrt{3}$. Basically the superposition of all the unary digits. We use two qubits per edge, one for the $+$ state and one for the $-$ state. For every node $u\in V$, we use $\ln(\text{deg}(u))+1$ qubits to apply the diffusion across all the neighborhood.

\paragraph{Edges register}
The edge register represents the position of the walker and is the one measured at the end of the algorithm. It consists of two qubits per edge, corresponding respectively to the + and - polarity amplitudes.
To represent a unary state in this register, we need to introduce the notation  $\delta_k^n = \underbrace{0\ldots0}_{k-1 \text{ times}}1\underbrace{0\ldots0}_{n-k \text{ times}}$ for the string of $n$ digit with $0$ everywhere except at position $k$ where it has $1$. For instance, $\delta_2^4 = 0100$.
For the sake of simplifying the notations, let us arbitrary enumerate the edges of $E$ such that we have $E = \{e_1,\ldots,e_{|E|}\}$. The full state of the walk is a linear combination of all the $\left( \ket{\delta_k^{2|E|}}\right) _{k \in 2|E|}$, where $\ket{\delta_{2k}^{2|E|}}$ corresponds to amplitude $\ket{e_k}\ket{+}$ of the mathematical model, and $\ket{\delta_{2k+1}^{2|E|}}$ corresponds to amplitude $\ket{e_k}\ket{-}$. It is important that the global state of the register remains a superposition of $\left( \ket{\delta_k^{2|E|}}\right) _{k \in 2|E|}$, as this allows us to obtain a valid solution to the search problem during measurement. In fact, no matter which state $\delta_k^{2|E|}$ is measured, all the edges are measuring state 0 except one, which is measuring state 1.

\paragraph{Nodes register}
The node register consists of $\log \deg +1$ qubits per node. This register is made up exclusively of auxiliary qubits used for the scattering operation. During the scattering operation, the amplitudes of the edges around a given node $u$ are moved to $u$'s auxiliary qubits. The $u$'s qubits then holds the local amplitudes in binary format, applies the scattering operation and finally transfers these amplitudes back to the local edges. After every step of the quantum walk, the nodes register is in state $\ket{0\ldots0}$ without any entanglement with the edges register.

\subsection{Distributed protocol}
The various quantum registers, their sizes, and how they are positionned on the network has been made clear in the previous section. We know give distributed schemes for every operation of the mathematical model: oracle, coin and scattering.

\paragraph{Oracle}
In the mathematical model, the oracle applies $R$ on the marked edge only. A definition on the basis of this operator when $e_1$ is marked would be 
$$
\mat{\ket{e_1}\ket{+}\\\ket{e_1}\ket{-}\\} \overset{oracle}\longmapsto R\mat{\ket{e_1}\ket{+}\\\ket{e_1}\ket{-}\\}.
$$ and 
$$\qquad \qquad \ket{e_k}\ket{s} \overset{oracle}\longmapsto \ket{e_k}\ket{s} $$
for all $2\leq k\leq |E|$,  $\forall s = \pm$.\\
Translated to the distributed registers it means that 
$$
\mat{\ket{\delta_{2k}^{2|E|}}\\\ket{\delta_{2k+1}^{2|E|}}\\} \overset{oracle}\longmapsto R\mat{\ket{\delta_{2k}^{2|E|}}\\\ket{\delta_{2k+1}^{2|E|}}\\}, \qquad \text{and identity everywhere else}.
$$
We recall that in our searching algorithm, $R=-X$. This operation can be achieved by two $Z$ gates and a swap applied only to the marked edge.

\paragraph{Coin}
The coin operation is very similar to the oracle. We want to apply $C=X$ to all edge states. In the mathematical model, it can be written as
$$
\forall k, \; \mat{\ket{e_k}\ket{+}\\\ket{e_k}\ket{-}\\} \overset{oracle}\longmapsto C\mat{\ket{e_k}\ket{+}\\\ket{e_k}\ket{-}\\} .
$$
This can be realized by a swap applied on all edges.

\paragraph{Scattering}
The scattering operation consists in applying a diffusion operator $D$ locally around the nodes. Since scattering around each node is affecting distinct qubits, we only need to design the circuit for one node $u$ of degree $d$. We note $a_1,\ldots,a_d$ the local edges connected to $u$. For this section we call $\eta_k$ the qubits of edge $a_k$ accessible to $u$ following the polarity. Note that, while the local edges (and their qubits) are indexed, this is completely arbitrary and is not changing the final result of the scattering. This nice property is due to the shift invariance of diffusion operator $D$. The $\log d$ first quibts of $u$ are storing the amplitudes of the local edges in a binary format, while the last one is marking the states we need to act on. We note $k^{(2)}$ the binary representation of $k$. We define the operator $\text{Tr}_k$ between qubits $\eta_k$ and $u$ register such that
$$
	\ket{0}\ket{0^{(2)}}\ket{0} \overset{\text{Tr}_k}\longmapsto \ket{0}\ket{0^{(2)}}\ket{0} \qquad \text{and}\qquad 
	\ket{1}\ket{0^{(2)}}\ket{0} \overset{\text{Tr}_k}\longmapsto \ket{0}\ket{(k-1)^{(2)}}\ket{1}.\\
$$
This operator $\text{Tr}_k$ admits a circuit depending of $k$ composed of X, CNOT and multi controlled Tofolli gates. Algorithm \ref{algo:Trk} shows a distributed scheme for realizing this circuit. It is running on the node and uses two communication methods: \texttt{RequestCnot}(edge, target) which applies a Not on target controlled by the qubit in edge accessible according to polarity; and \texttt{ApplyMCT}(edge) which applies a Not on the edge's qubit controlled by the full node register. In the worst case, this algorithm require $\log d +1$ calls to \texttt{RequestCnot} and one call to \texttt{ApplyMCT}.
We define Tr by the successive applications of $\text{Tr}_k$ between edge $a_k$ and node $u$ for all $k$. It is worth to note that, while one has to apply every $\text{Tr}_k$ successively, the application order does not matter. On the direct basis, the complete scattering operation we want to achieve can be written as 
$$
\underbrace{\ket{\delta_k^d}}_{\text{edges}}\ket{0^{(2)}}\ket{0} \overset{\text{Tr}}\longmapsto \ket{0}\ket{k^{(2)}}\ket{1} \overset{D}\longmapsto \ket{0}D\ket{k^{(2)}}\ket{1} \overset{\text{Tr}^{-1}}\longmapsto \underbrace{\left( \sum_{j=1}^dD_{k,j}\ket{\delta_j^d}\right)}_{\text{walker scattered}} \otimes\ket{0^{(2)}}\ket{0}.
$$ 
The last qubit allows us to apply $D$ only when the walker was actually on edge $k$. Following this global idea, Algorithm \ref{algo:Tr} provides a distributed implementation for the scattering operator. Every node needs a total of $O(d\log d)$ controlled operations (\texttt{RequestCnot} or \texttt{ApplyMCT}) with the local edges.

Finally, Figure \ref{fig:circuit} shows an example of this distributed design of the path graph of five nodes. The path graph has the particularity of having the same topology as a circuit (a qubit being connected to the preceding and following qubit). Notice that such circuit coincied with a partitioned QCA, each operation is local and traslationally invariant. 

\begin{algorithm}[h]
	\caption{Distributed scheme for $\text{Tr}_k$ on node $u$ of degree $d$}
	\label{algo:Trk}
	\begin{algorithmic}[1]
		\Require $e$ an edge connected to $u$
		\Require $1\leq k \leq d$
		\Function{$\text{Tr}_k$}{$e,k$}
		\State $r \gets \lceil\log d\rceil$ \Comment{The size of $u$'s register is $r+1$}
		\For{$0\leq i < r \mid \left((k-1)^{(2)} \right)_i = 1$}
		\State \Call{RequestCnot}{$e, q_i$} \Comment{CNOT on the $i^{\text{th}}$qubit of $u$}
		\EndFor
		
		\State \Call{RequestCnot}{$e, q_r$}
		
		\State Apply $X$ gate on all qubits $q_i$ of $u$ such that $\left((k-1)^{(2)} \right)_i = 0$.
		
		\State \Call{ApplyMCT}{e}
		
		\State Apply $X$ gate on all qubits $q_i$ of $u$ such that $\left((k-1)^{(2)} \right)_i = 0$.
		
		\EndFunction
	\end{algorithmic}
\end{algorithm}

\begin{algorithm}[h]
	\caption{Distributed scheme for Tr on node $u$ of degree $d$}
	\label{algo:Tr}
	\begin{algorithmic}[1]
		\Require $\mathcal{N}$ the set of all edges connected to $u$
		\Function{$\text{Tr}$}{$u$}
		\State $\{a_1,\ldots,a_d\} = \mathcal{N}$ \Comment{Random or arbitrary enumeration of the edges}
		\For{$1\leq k\leq d$}
		\State \Call{$\text{Tr}_k$}{$a_k,k$}
		\EndFor
		\EndFunction
	\end{algorithmic}
\end{algorithm}

\begin{figure}[h]
	\centering
	\includegraphics[width=\textwidth]{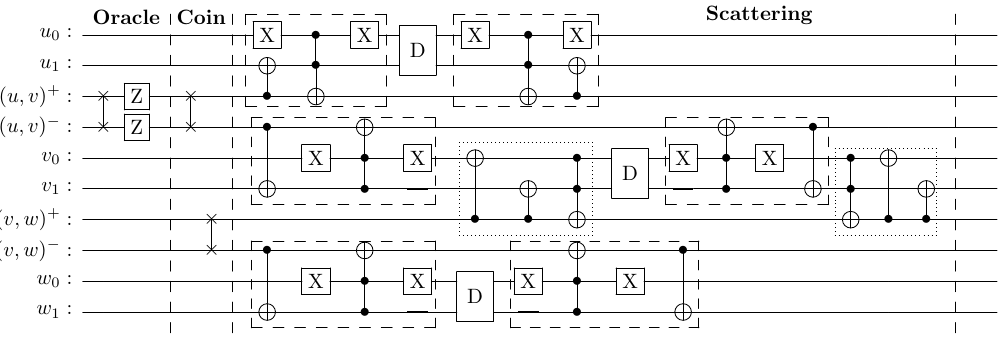}
	\caption{Circuit of one step of the quantum walk for the path graph $u-v-w$. Dashed lines signal $\text{Tr}_1$ circuit and its inverse while dotted lines $\text{Tr}_2$. The circuit applies successively the oracle on $(u,v)$, the coin, Tr, $D$, $\text{Tr}^{-1}$.}
	\label{fig:circuit}
\end{figure}

\section*{Discussion}
The distributed algorithm described in Section \ref{sec:impl} is based on the quantum walker dynamics introduced earlier for the single particle sector. The most important feature of the above model is that it does not require a leader or unique identifiers. Once the measure is done, all edges measure $0$ except one that measure $1$, giving us a valid solution; i.e. a solution where every edge agrees on who is marked. The edge who measured $1$ can then send the answer to some hypervisor or broadcast it across the network, depending of the context. Notice that this distributed scheme requires that the network be in a $W$ state, which is very common and widely studied in the distributed algorithm community~\cite{cruz2019efficient,tani2012exact}. To study the performances of this algorithm, one should use the mathematical model from Section \ref{sec:maths}. In general, it is hard to study analytically this model, except in some rare cases, where a lot of edges have the same state during the whole execution. Indeed, the topology of the dynamics, which is different from that of the graph, makes its analysis mathematically difficult. In fact, we have to consider a topology based on edges, where two edges are neighbors if they share a node. As such, regular lattices like a grid do not actually have the topology of a grid. However, one could also study this algorithm numerically; preliminary non published work on this aspect have already shown promising results for popular lattices and random graphs. To conclude, it is also important to note that the distributed model recover a QCA-like dynamics on an arbitrary graph. Indeed, in the case of a linear graph, it coincides with the standard partitioned QCA. Overall each operation are node- and edge-independent, so they are translationally invariant. And all operations are local. A byproduct of these properties is a promising distributed architecture which naturally gives a degree of fault tolerance because of local interactions, limiting error propagation.

\paragraph{Aknowledgement}
This work is supported by the PEPR EPiQ ANR-22-PETQ-0007, by the ANR JCJC DisQC ANR-22-CE47-0002-01.

\bibliographystyle{splncs04}
\bibliography{sn-bibliography}

\end{document}